\g@addto@macro{\UrlBreaks}{\UrlOrds}
\renewcommand{\algorithmicreturn}[1]{\bgroup\\  ~#1\egroup}
\renewcommand{\algorithmiccomment}[1]{\bgroup\hfill//~#1\egroup}
\theoremstyle{plain} \newtheorem{lemma}{\textbf{Lemma}}
\theoremstyle{plain} \newtheorem{proposition}{\textbf{Proposition}}
\theoremstyle{remark} \newtheorem{remark}{\textbf{Remark}}
\theoremstyle{plain} \newtheorem{theorem}{\textbf{Theorem}}
\theoremstyle{plain} 
\theoremstyle{plain} 
\theoremstyle{plain} 
\theoremstyle{definition} 
\theoremstyle{definition}
\newcommand{\pushright}[1]{\ifmeasuring@#1\else\omit\hfill$\displaystyle#1$\fi\ignorespaces}
\newcommand{\pushleft}[1]{\ifmeasuring@#1\else\omit$\displaystyle#1$\hfill\fi\ignorespaces}
\let\@@pmod\pmod
\DeclareRobustCommand{\pmod}{\@ifstar\@pmods\@@pmod}
\def\@pmods#1{\mkern4mu({\operator@font mod}\mkern 6mu#1)}
\newcounter{NbCogito} \setcounter{NbCogito}{0}
\newcounter{NbFactum} \setcounter{NbFactum}{0}
\newcounter{NbTabulare} \setcounter{NbTabulare}{0}
\newcounter{NbCoco} \setcounter{NbCoco}{0}
\newcommand{\rmv}[1]{}
\def\cL{{\mathcal L}}
\def\cN{{\mathcal N}}
\def\cR{{\mathcal R}}
\def\fb{{\mathfrak b}}
\def\fa{{\mathfrak a}}
\def\fc{{\mathfrak c}}
\def\FF{{\mathbb F}}
\def\KK{{\mathbb K}}
\def\LL{{\mathbb L}}
\def\ZZ{{\mathbb Z}}
\def\rT{{\mathrm T}}
\def\sT{{\mathsf T}}
\def\bk{{\mathbf k}}
\providecommand{\myproofname}{Proof}
\title{The complexity of elliptic normal bases} 
\author{Daniel Panario}
\affil
{\stackunder{\small{Carleton University, Canada}}
{\mbox{\small{\texttt{daniel@math.carleton.ca}}}}
}
\author{Mohamadou Sall}
\affil
{\stackunder{\small{Universit\'e Cheikh Anta Diop of Dakar, Senegal}}
{\mbox{\small{\texttt{msallt12@gmail.com}}}}}
\author{Qiang Wang}
\affil
{\stackunder{\small{Carleton University, Canada}}
{\mbox{\small{\texttt{wang@math.carleton.ca}}}}}
\date{}
\begin{document}

\maketitle
\thispagestyle{empty}

\vspace{-1cm}

\begin{abstract}
	We study the complexity (that is, the weight of the multiplication
	table) of the elliptic normal bases introduced by Couveignes and 
	Lercier. We give an upper bound on the complexity of these elliptic
	normal bases, and we analyze the weight of some special vectors related
	to the multiplication table of those bases. This analysis leads us to
	some perspectives on the search for low complexity normal bases from
	elliptic periods.
\end{abstract}

\section{Introduction}

Finite field arithmetic is at the heart of modern information and
communications technology. Indeed the latest generations of Intel 
processors support a hardware instruction for multiplication in 
$\FF_{2^n}$, the finite field of $2^n$ elements \cite{GK08}. The 
main reason behind modern general purpose processors to support 
such arithmetic operations is that finite fields these days appear 
almost everywhere in information processing and telecommunication 
engineering, particularly in error correcting codes and cryptography.

When performing finite field arithmetic, the cost of various operations 
(e.g., addition, multiplication, exponentiation) depends on the choice 
of representation for the elements. For small fields, look-up tables 
can be employed and for mid-sized fields Zech logarithms can be used; 
see \cite[Chapter 10, Table B]{LN86} for an example. However, for 
larger finite fields used in modern public key cryptography, these 
methods are no longer feasible. Thus, for a finite field extension 
$\FF_{q^n}/\FF_q$, it is usual to exploit an explicit isomorphism 
between the field $\FF_{q^n}$ and the corresponding vector space 
$\FF_{q}^n$ by means of a suitably chosen basis. In this paper, we 
are interested in normal bases.

Let $\KK$ be a degree $n$ cyclic extension of a field $\bk$, and 
$\sigma$ a generator of the Galois group $\mathrm{Gal}(\KK/\bk)$. 
Then, a normal basis of $\KK$ over $\bk$ is a basis 
$(\alpha,\sigma(\alpha),\ldots,\sigma^{n-1}(\alpha))$
generated by some $\alpha$ in $\KK\!^*$, the units of $\KK$. Such 
an $\alpha$ is a normal element and the normal basis theorem 
\cite[Theorem 1.4.1]{Gao-Thesis} ensures its existence for all 
finite Galois extension $\KK/\bk$.

Normal bases of $\FF_{q^n}/\FF_q$ have a special property that 
exponentiation by $q$ corresponds to a cyclic shift, and it can be 
computed in negligible time. This is a huge computational advantage 
but multiplication in this context is much more difficult. Let 
$(x_i)_{0\leq i\leq n-1}$ and $(y_i)_{0\leq i\leq n-1}$ be the 
representation of $x$ and $y$ in a normal basis $\cN$ of 
$\FF_{q^n}/\FF_q$. Let $(z_i)_{0\leq i\leq n-1}$ be the coordinates 
of the product $xy$ in $\cN$. Because of the structure of $\cN$, 
the multiplication of basis elements $\alpha^{q^i} \cdot \alpha^{q^j}$ 
are cyclic shifts of each other. It is enough to consider the 
products $\alpha^{q^0} \cdot \alpha^{q^0}, \ldots, \alpha^{q^0} 
\cdot \alpha^{q^{n-1}}$ written in the normal basis.
This defines the matrix multiplication of the normal basis.  
The total number of nonzero elements in this matrix is the 
complexity $C(\cN)$ of $\cN$. The lower the complexity is, the more 
interesting these bases become. Indeed, in practice, the number of 
fanout of cells decreases with this complexity. In hardware 
implementations, normal bases are often preferred; see 
\cite[Section 16.7]{MP13} for more details.

There are different ways of doing efficient finite field arithmetic
using normal bases. For example, the bases described in 
\cite{CL09, ES19} give the product $xy$ in a normal basis from the 
coordinates of $x$ and $y$ without using the multiplication table 
of this basis. One can also extend appropriate normal bases or use 
trace computation to take advantage of their structures. We shall 
not go into details into those studies, so we just refer to 
\cite{CGPT08, ES20, TW18} for these approaches.

A well-known method to construct efficient normal bases is using 
Gauss periods; see \cite{Ash-Blake-Vanstone, Gao-Gathen-Panario-Shoup}.
However, as shown by Wassermann \cite{W90}, Gaussian normal bases 
do not always exist. Couveignes and Lercier \cite{CL09} show how 
these methods can be generalized using elliptic curve groups. 
Since there are many elliptic curves, they enlarge significantly 
the number of cases where a normal basis with fast multiplication 
exists. However, the number of nonzero elements of the multiplication 
table, from their construction is not well understood. In this paper 
we proceed to the analysis of the complexity of normal bases from 
elliptic periods as designed in \cite{CL09}. We give some bounds on 
this complexity using special vectors defined from elliptic curves.

In Section \ref{sec:BM}, we review some background material on 
elliptic curves and normal elements. Our results on the complexity 
of elliptic normal bases are presented in Section \ref{sec:CEP}. 
In Subsection \ref{sec:BCEP}, we prove the main theorem that gives 
upper and lower bounds on this complexity. In Subsection \ref{sec:CSV}, 
we analyze the weight of some special vectors related to the 
multiplication tensor of elliptic normal bases. In Section 
\ref{sec:EENBT}, we support our studies with practical examples 
using the computational algebra system Magma \cite{HMF}. For these 
experiments, we develop a Magma package that defines all the 
parameters of an elliptic normal basis, and then gives bounds on 
its complexity. This Magma package, with test and concrete examples, 
is available online \cite{MagmaPack}. We conclude, in Section 
\ref{sec:KNRP}, with some problems for further research.

\section{Background material}\label{sec:BM}
In this section, we briefly recall definitions and properties of 
elliptic curves and normal bases over finite fields as required 
in this paper.

\subsection{Elliptic curves}
Elliptic curves are originally from algebraic geometry which deals 
with the solutions of polynomial equations. Their subject encompasses 
a vast amount of mathematics. Our aim in this section is to summarize 
just enough of the basic theory as needed for the normal elements 
construction in \cite{CL09}.

An elliptic curve $E$ defined over a field $\KK$ is given by a 
Weierstrass equation of the form 
$$E : y^2 + a_1xy + a_3y = x^3 + a_2x^2 + a_4x + a_6,$$
with $a_1, a_2, a_3, a_4, a_6 \in \KK$. It can be shown, that 
there is a group associated with the points on this curve; see 
\cite{Silv09}, Chapter III, Section 2, for the proof. For any 
field extension $\LL/\KK$ the group of $\LL$-rational points of 
$E$ is the set 
$$E(\LL) = \{(x, y)\in \LL^2 \ : \ E(x, y)=0 \}\cup \{ O\},$$ 
endowed with the usual group law, where $O$ is the identity 
element. An important concept related to elliptic curves is finite 
dimensional linear spaces given by a special divisor on this curve. 
A divisor $D$ on an  elliptic curve $E$ is a formal sum 
$$D=\sum_{P\in E}n_P(P),$$ 
where $n_P\in\ZZ$ and $n_P = 0$ for all but finitely many $P\in E$. 
For two divisors $D_1$ and $D_2$ over $E$, we have that $D_1\geq D_2$ 
if $n_P\geq 0$ for every $P$ in the formal sum given by $D_1-D_2$. 
To every divisor $D$, we can associate a set of functions
$$\cL(D) = \{f\in \bar{\KK}(E)^* : \mathrm{div}(f)\geq-D\}\cup \{ O\},$$
where $\bar{\KK}$ is the algebraic closure of ${\KK}$. Here, the 
formal sum $\mathrm{div}(f)$ is the divisor associated to the 
function $f$, that is,
$$ \mathrm{div}(f) = \sum_{P\in E(\bar{\KK})}\nu_P(f)(P),$$
where $\nu(f)$ is the order of the vanishing of $f$ at the point $P$. 
We have that $\cL(D)$ is a finite-dimensional $\bar{\KK}$-vector 
space, and its importance lies in that 
it gives information about functions on $E$ having prescribed zeros 
and poles. Couveignes and Lercier \cite{CL09} made use of relations 
between elliptic functions in these linear spaces which result in 
nice multiplication formulas for their elliptic basis. 
However, the number of nonzero elements in the multiplication tables
(the complexity) of these elliptic normal bases was not analyzed.

\subsection{Normal bases over finite fields}
Let $\FF_{q^n}$ be a field extension of degree $n$ over $\FF_q$,
and $\FF_{q}^{n}$ an $\FF_q$-vector space of dimension $n$.
Let $\alpha_0, \alpha_1, \ldots, \alpha_{n-1} \in \FF_{q^n}$
be a basis of this space. Then for all $A \in \FF_{q^n} $
$$A=\sum_{i=0}^{n-1} a_i\alpha_i,$$
where $a_i \in \FF_q$. We write $A=(a_0, a_1, \ldots, a_{n-1})$.
Let $B=(b_0, b_1, \ldots, b_{n-1})$ be another element of 
$\FF_{q^n}$ and $C=(c_0, c_1,\ldots, c_{n-1})$ the product
\begin{align*}
	A\cdot B    = \sum_{i,j} a_i b_j\alpha_i\alpha_j.
\end{align*}
Since $\alpha_i\alpha_j\in\FF_{q^n}$, one can set, for $t_{ij}^{(k)}\in\FF_q$, 
$$ \alpha_i\alpha_j = \sum_{k=0}^{n-1} t_{ij}^{(k)}\alpha_k.$$
Let $ \rT_k = (t_{ij}^{(k)}) $ be an $n \times n $ matrix over 
$\FF_q$ independent of $A$ and $B$, and let $B^t$ be the transpose 
of $B$. We have that
$$ c_k = \sum_{0\leq i,j < n} a_i b_jt_{ij}^{(k)} = A\rT_kB^t, \ \ 0 \leq k < n.$$

A normal basis $\mathcal{N}$ of $\FF_{q^n}$ over $\FF_q$ is a 
basis of the form $\{\alpha, \alpha^q,\ldots, \alpha^{q^{n-1}} \}$. 
In this case, from one matrix $\rT_0$ one can get the properties 
of the remaining matrices in the collection $(\rT_k)_{0\leq k\leq n-1}$. 
This is a main advantage of a normal basis. The total number of 
nonzero structural constants in $\rT_0$ is very important and 
should be kept as small as possible for efficient software and 
hardware implementation; see \cite{Ash-Blake-Vanstone,Gao-Thesis}. 
The number of nonzero elements of $\rT_0$ is the complexity of $\cN$.

\begin{proposition}[\cite{MOVW89}]\label{prop:opt:nbasis}
	The complexity $C_{\cN}$ of a normal basis $\cN$ of $\FF_{q^n}$ over 
	$\FF_q$ is bounded by $$2n-1\leq C_{\cN}\leq n^2-n+1.$$
\end{proposition}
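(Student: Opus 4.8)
The plan is to establish the two bounds separately, each by a rank argument applied to the matrices $\rT_k$ together with the cyclic-shift structure of a normal basis.

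For the lower bound $C_{\cN}\ge 2n-1$, I would first observe that, because $\{\alpha,\alpha^q,\ldots,\alpha^{q^{n-1}}\}$ is a basis, the symmetric bilinear form attached to multiplication is nondegenerate; concretely, for each fixed $j$ the map sending $\alpha^{q^i}$ to the coordinate vector of $\alpha^{q^i}\alpha^{q^j}$ is invertible, so each column index of $\rT_0$ must be ``hit.'' The key combinatorial fact to extract is that the matrix $\rT_0=(t_{ij}^{(0)})$ cannot have an all-zero row or an all-zero column, since otherwise one could produce a nontrivial linear dependence among the $\alpha^{q^i}$, contradicting that $\cN$ is a basis. A single nonzero entry in each of the $n$ rows and each of the $n$ columns already forces at least $n$ nonzero entries, but one has to work slightly harder: using the symmetry of the multiplication tensor (the $t_{ij}^{(k)}$ are symmetric in $i,j$ after the appropriate cyclic normalization) and the fact that $\Tr(\alpha^{q^i}\alpha^{q^j})$ controls the entry pattern, one shows the entries cannot all be concentrated on a single row/column, which yields the sharper count $2n-1$. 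The standard way to make this rigorous is to note $\rT_0$ has rank $n$ (it is essentially the Gram matrix of the trace form in the normal basis, which is invertible), and a rank-$n$ matrix has at least $2n-1$ nonzero entries by an elementary argument: a matrix with $k$ nonzero entries has rank at most $\lceil$ something $\rceil$, or more cleanly, by permuting one may assume the diagonal is nonzero (rank $n$ needs a full-support permutation), and then at least $n$ off-diagonal positions must also be nonzero to avoid a block-triangular decomposition that would lower the rank below $n$ unless... — the cleanest route is the known lemma that an $n\times n$ matrix of rank $n$ over any field has at least $2n-1$ nonzero entries, proved by induction on $n$ by deleting a row and column through a nonzero entry.

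For the upper bound $C_{\cN}\le n^2-n+1$, the point is that $\rT_0$ cannot be the full matrix: at least $n-1$ of its entries must vanish. I would argue this via a trace identity. Writing $t_{ij}^{(0)}=\Tr\!\big(\alpha^{q^i}\alpha^{q^j}\beta\big)$ for the dual element $\beta$ of $\alpha$ with respect to the trace form (so that $t_{ij}^{(k)}$ is the coordinate extraction via $\Tr(\cdot\,\alpha^{q^{-k}}\beta)$ after the shift), one uses that $\sum_{j} t_{ij}^{(k)}$ and analogous contractions are governed by $\Tr(\alpha^{q^i}\cdot(\sum_j \alpha^{q^j})\cdot\beta)$, and the element $v=\sum_{j=0}^{n-1}\alpha^{q^j}=\Tr_{\FF_{q^n}/\FF_q}(\alpha)\in\FF_q^*$ is a nonzero scalar. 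This scalar relation imposes exactly one linear constraint on the entries of each row (and each column) of $\rT_0$, but these constraints collectively remove $n-1$ degrees of freedom — more precisely, they force at least $n-1$ zeros — because the all-ones contraction of the multiplication tensor against a scalar must reproduce the (shifted) coordinate vector of a scalar, which has a single nonzero coordinate, whereas a full matrix would force $n$ nonzero coordinates. Counting, $n^2-(n-1)=n^2-n+1$.

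The main obstacle will be making the lower-bound argument genuinely tight at $2n-1$ rather than merely $n$: ruling out a matrix of rank $n$ with between $n$ and $2n-2$ nonzero entries. The inductive deletion argument (pick a nonzero entry, delete its row and column, check the Schur-complement-type minor still has rank $n-1$ and that we have not over-counted shared nonzero entries) is the delicate bookkeeping step, and I would present it as a self-contained linear-algebra lemma so that the elliptic/normal-basis specifics are cleanly separated from the combinatorics. The upper bound's subtlety is smaller but real: one must check that the single scalar relation per row is not already implied by the others, i.e. that it genuinely kills a zero in a position not forced by another row's relation; this follows from the invertibility (rank $n$) established for the lower bound, which guarantees the rows are independent and hence their constraints are ``spread out.''
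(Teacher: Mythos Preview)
The paper does not prove this proposition; it is quoted from \cite{MOVW89} as background, so there is no in-paper argument to compare against. Assessing your sketch on its own terms, there is a genuine gap in both halves.

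For the lower bound, your argument rests on the claim that ``an $n\times n$ matrix of rank $n$ over any field has at least $2n-1$ nonzero entries.'' This is false: the identity $I_n$ has full rank and exactly $n$ nonzero entries, and $n<2n-1$ for every $n\ge 2$. Your inductive deletion cannot close either --- deleting the row and column of $I_n$ through a diagonal entry leaves $I_{n-1}$, and restoring the single removed entry gives at most $(2n-3)+1=2n-2$. So invertibility of $\rT_0$, which you correctly establish, is by itself insufficient; a permutation matrix shows that no purely rank-based argument can work. Ironically, the actual argument uses precisely the trace identity you introduced for the \emph{upper} bound: since $\sum_{i}\alpha^{q^i}\alpha^{q^j}=\Tr(\alpha)\cdot\alpha^{q^j}=v\,\alpha^{q^j}$ with $v\in\FF_q^*$, column $j$ of $\rT_0$ sums to $v\delta_{j,0}$. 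Hence each column $j\neq 0$ sums to $0$ and so cannot consist of a single nonzero entry (a lone nonzero entry cannot sum to zero), forcing at least two nonzeros there, while column $0$ contributes at least one. This yields $2(n-1)+1=2n-1$. You had the right identity but aimed it at the wrong inequality.

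For the upper bound, that same identity gives only the linear relations $\sum_i t_{ij}^{(0)}=0$ for $j\neq 0$, and a linear relation among field elements does not force any individual entry to vanish; your mechanism for manufacturing $n-1$ zeros therefore does not work. In fact the inequality $C_{\cN}\le n^2-n+1$ does not hold in general: for $\FF_9/\FF_3$ with $\alpha^2=\alpha+1$ the normal basis $\{\alpha,\alpha^3\}$ has $\rT_0=\bigl(\begin{smallmatrix}2&2\\2&1\end{smallmatrix}\bigr)$ with four nonzero entries, whereas $n^2-n+1=3$. The reference \cite{MOVW89} establishes only the lower bound, so before attempting the upper half you would first need to locate a correct formulation (or additional hypotheses) for it.
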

A normal basis is optimal if it achieves the lower bound in Proposition 
\ref{prop:opt:nbasis}, and a normal basis has a low complexity if it 
has sub-quadratic bounds, with respect to $n$, on their complexity.
It turns out that most normal bases do not have low complexity 
\cite{MMPT08, MPT18}, and finding normal elements with low complexity 
is an interesting research problem \cite[Problem 6.2]{Gao-Thesis}.
For example, the efficiency of multiplication of some (non-normal) 
bases depends on the complexity of a well-chosen normal element;
for more details, see \cite{ES20,TW18}.

\subsection{Elliptic normal bases}

There are different ways of constructing normal elements. Couveignes 
and Lercier \cite{CL09} were the first to study normal bases from 
elliptic periods. Their construction is derived from elliptic curves 
properties. Let $v_l$ be the valuation associated to $l$, that is, 
$v_l(n)$ is the degree of divisibility of $n$ by $l$. The existence 
of an appropriate elliptic curve depends on a unique positive integer 
$n_q$ satisfying for every prime number $l$
\begin{equation}
	\begin{cases}
		v_l(n_q)=v_l(n) \text{ if } l \text{ is coprime with } q-1,\\
		v_l(n_q)=0 \text{ if } v_l(n)=0, \\
		v_l(n_q)=\mathrm{max}\{2v_l(q-1)+1, 2v_l(n)\} \text{ if } l|q-1 \text{ and } l|n.\\
	\end{cases}
\end{equation}
The following theorem gives a brief summary of their results; 
for more details, see \cite[Section 5]{CL09}.

\begin{theorem}[\cite{CL09}]\label{the:main:theo:couv-lerc}
	With the above notation, let $p$ be a prime number and $q$ a power 
	of $p$. Let $n\geq2$ be an integer such that $$n_q\leq \sqrt{q}.$$
	Then, there exists an elliptic curve $E$ over $\FF_q,$ a point
	$t\in E(\FF_q)$ of order $n$ and a point $b\in E(\bar{\FF}_q)$ such 
	that $\phi(b)=b+t$ and $nb\neq 0$, where $\phi$ is the Frobenius
	automorphism. 
	
	Furthermore, there exists a point $R\in E(\FF_q)$ such that $nR\neq O$, 
	and a normal basis $\cN$ of $\FF_{q^n}/\FF_q$ with quasi-linear cost 
	of multiplication. 
\end{theorem}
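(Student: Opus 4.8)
\noindent\textbf{Proof proposal (following \cite{CL09}).}
The statement has two largely independent halves: first, producing the arithmetic data $(E,t,b,R)$ with the stated properties, and second, turning that data into a normal basis with quasi-linear multiplication. In both halves I would follow the approach of Couveignes and Lercier.

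For the first half, the plan is to realize the required group structure on $E(\FF_q)$. By Deuring's theorem and Waterhouse's description of admissible Frobenius traces, every integer $N$ in the Hasse interval $[\,q+1-2\sqrt q,\ q+1+2\sqrt q\,]$ (prime to $p$ in the ordinary case) occurs as $\#E(\FF_q)$ for some $E/\FF_q$; moreover, by Rück's and Voloch's results on which finite abelian groups occur, one has $E(\FF_q)\cong\ZZ/d_1\ZZ\times\ZZ/d_2\ZZ$ with $d_1\mid d_2$ and $d_1\mid q-1$, and conversely such a pair is realized whenever $d_1d_2$ lies in the Hasse interval (with the appropriate supersingular/ordinary side conditions). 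The integer $n_q$ is built precisely so that it is the least multiple of $n$ admitting such a structure with a rational point of exact order $n$ in the cyclic factor $\ZZ/d_2\ZZ$ while respecting $d_1\mid q-1$ at the primes $\ell\mid\gcd(n,q-1)$; the hypothesis $n_q\le\sqrt q$ then leaves room inside the Hasse interval to pick such an $E$ together with a rational point $t$ of order $n$. Next I would produce $b$: since $\bar{\FF}_q$ is algebraically closed and $\phi-1$ is a nonzero endomorphism of $E$ with kernel $E(\FF_q)$, the map $\phi-1\colon E(\bar{\FF}_q)\to E(\bar{\FF}_q)$ is onto, so there is $b$ with $\phi(b)-b=t$; the preimages of $t$ form a coset of $E(\FF_q)$, and I can choose $b$ with $nb\neq O$ unless $E(\FF_q)$ is killed by $n$ — the degenerate case excluded by taking $d_2$ strictly larger than $n$ in the problematic cases, which is exactly why $n_q>n$ there. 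Finally, from $\phi^k(b)=b+kt$ one gets $\phi^n(b)=b$ with no smaller power fixing $b$, so $b\in E(\FF_{q^n})$ with $\phi$-orbit of exact size $n$, and a rational $R$ with $nR\neq O$ exists for the same structural reason.

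For the second half, I would fix an $\FF_q$-rational function $f$ on $E$ with a small prescribed polar divisor built from $R$ — a function in a Riemann--Roch space $\cL(D)$ with $\deg D$ of order $n$ — and set $\alpha=f(b)\in\FF_{q^n}$. Since $f$ has coefficients in $\FF_q$, applying $\phi^i$ gives $\alpha^{q^i}=f(\phi^i(b))=f(b+it)$, so the candidate basis is $\big(f(b),f(b+t),\dots,f(b+(n-1)t)\big)$ — normal by construction, once one checks $\FF_q$-linear independence. I would argue that a relation $\sum_i c_i f(b+it)=0$ forces $h=\sum_i c_i\,f(\,\cdot+it)$, which lies in the $n$-dimensional span of the translates of $f$ and whose polar divisor has degree $n$, to vanish on the whole orbit $\{b,b+t,\dots,b+(n-1)t\}$; since that orbit is disjoint from the polar support (here $nb\neq O$ and $nR\neq O$ keep the configuration generic), a Riemann--Roch count gives $h\equiv0$ and hence all $c_i=0$. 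This is the elliptic-period analogue of the linear independence of Gauss periods. For the cost: the products $\alpha^{q^i}\alpha^{q^j}=f(b+it)\,f(b+jt)$ are the values at $b+it$ of the translated function $f\cdot f(\,\cdot+(j-i)t)$, and the key point in \cite{CL09} is that all such product functions lie in a single Riemann--Roch space of dimension $O(n)$ spanned by translates of $O(1)$ fixed functions; multiplication then reduces to evaluating at the orbit $\{b+it\}$, multiplying pointwise, and interpolating back, and these evaluation/interpolation maps carry the same cyclic symmetry as the basis and run in quasi-linear time via fast polynomial arithmetic.

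The hard part will be the first half: nailing down the precise list of realizable structures $E(\FF_q)\cong\ZZ/d_1\ZZ\times\ZZ/d_2\ZZ$ and the bookkeeping at the primes $\ell\mid q-1$, where the Weil pairing forces $\mu_\ell\subseteq\FF_q^*$ and hence constrains $d_1$ — exactly the constraint that the auxiliary integer $n_q$ encodes, with $n_q\le\sqrt q$ being what guarantees an admissible curve inside the Hasse interval. A secondary difficulty is making the quasi-linear claim quantitative: identifying the Riemann--Roch spaces that carry the product functions and exhibiting explicit fast evaluation and interpolation on the orbit $\{b+it\}$.
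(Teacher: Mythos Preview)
The paper does not prove this theorem at all: it is quoted verbatim as a result of Couveignes and Lercier, with the reader sent to \cite[Section~5]{CL09} for details. So there is no ``paper's own proof'' to compare against beyond the bare citation, and your sketch already goes well beyond what the present paper offers.

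As a summary of the argument in \cite{CL09}, your outline is broadly faithful: the existence of $(E,t)$ via the Deuring--Waterhouse--R\"uck description of realizable group structures on $E(\FF_q)$, the production of $b$ by surjectivity of $\phi-1$ on $E(\bar\FF_q)$, and the normal basis coming from evaluating translated elliptic functions along the Frobenius orbit $\{b+it\}$, with quasi-linear multiplication obtained by passing through an evaluation model. One inaccuracy worth flagging: you describe the defining function $f$ as having ``polar divisor built from $R$'', but in \cite{CL09} the basis elements are $\alpha_k=\fa\,f_{kt,(k+1)t}(b)+\fb$, where $f_{kt,(k+1)t}$ has its two simple poles at the torsion points $kt$ and $(k+1)t$, not at $R$. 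The point $R$ plays a separate role: it is the base point for the evaluation map $\alpha\mapsto(\alpha(R+jt))_j$ that linearizes multiplication, and the condition $nR\neq O$ keeps $R$ off the polar support of the $f_{kt,(k+1)t}$. Your sketch conflates these two roles; keeping them apart is exactly what makes the linear-independence and the evaluation/interpolation arguments in \cite{CL09} go through cleanly.
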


\begin{remark}
	We keep the term ``complexity'' to refer to the number of non-zero elements
	in the multiplication table of a normal basis and use ``cost" for the running
	time of algorithms. We observe that in 
	\cite{CL09} the cost of the algorithm is called the complexity of 
	the operation. For vector, we use weight to represent the number 
	of nonzero elements of the vector.
\end{remark}

The methods of Couveignes and Lercier allow a fast multiplication 
algorithm but the complexity of their normal bases, especially 
needed for hardware implementations, was not analyzed. The main goal 
of this paper is to study the complexity of elliptic normal bases.

\section{The complexity of elliptic normal bases}\label{sec:CEP}

In this section, we give the main results of this paper. It may be 
convenient for the reader to have in hand Couveignes and Lercier's 
paper \cite{CL09}. Section 4, where the multiplication in the 
elliptic normal basis is given plays a central role in our paper.

\subsection{Notation and previous results}

We use the following notation. For $P$ a point on an elliptic 
curve $E$, we denote the $x$-coordinate of $P$ by $x_O(P)$ or 
simply $x(P)$. For $x$ and $y$, two elements of $\FF_{q^n}$, 
we denote by $\overrightarrow{x}$ and $\overrightarrow{y}$ their 
coordinates in the elliptic normal basis. We denote by 
$\overrightarrow{x}\star \overrightarrow{y}$ the convolution 
product and by 
$\overrightarrow{x} \diamond \overrightarrow{y}=(x_k y_k)_{k}$
the component-wise product of $\overrightarrow{x}$ and 
$\overrightarrow{y}$. We denote by 
$\sigma(\overrightarrow{x})=(x_{k-1})_{k}$, with $(x_{n-1})_{0}$,
the cyclic shift of $\overrightarrow{x}$. For a vector 
$\overrightarrow{\cR}$ of length $n$, we denote by 
$\overleftarrow{\cR}_k$ the vector given by the component-wise 
product between $\overrightarrow{\cR}$ and its $k$-cyclic shift; 
we remark the use of left-to-right arrow for this operation. 
It is clear that the complexity of $\overleftarrow{\cR}_k$ is 
less than or equal to the complexity of $\overrightarrow{\cR}$. 

To differentiate between the complexity $C_{\cN}$ of the normal 
basis and the weight of a vector, we denote by $w(\overrightarrow{\cR})$ 
the weight, that is, the number of nonzero elements of a vector 
$\overrightarrow{\cR}$.

We require the following result in our analysis.

\begin{lemma}{\cite[Lemma 6]{CL09}} \label{mainlemma}
	The multiplication tensor for the normal elliptic basis is
\begin{eqnarray*}
	&& \hspace{-0.25cm} 
	(\overrightarrow{\iota} \mathfrak{a}^2)\star((\overrightarrow{x}
	-\sigma(\overrightarrow{x}))\diamond (\overrightarrow{y}
	-\sigma(\overrightarrow{y})) \\
	&+& \hspace{-0.25cm} 
	\overrightarrow{\cR}^{-1}\star((\overrightarrow{\cR}\star 
	\overrightarrow{x})\diamond (\overrightarrow{\cR}\star \overrightarrow{y})
	-(\overrightarrow{\cR}_x \mathfrak{a}^2)\star
	((\overrightarrow{x}-\sigma(\overrightarrow{x}))\diamond
	(\overrightarrow{y}-\sigma(\overrightarrow{y}))))),
\end{eqnarray*}
where $\overrightarrow{\cR}=(\fa f_{O, t}(R+jt)+\fb)_{0\leq j\leq n-1}$,
$\overrightarrow{\cR}_x=(x_O(R+jt))_{0\leq j\leq n-1}$, and $\overrightarrow{\iota}$
represents the coordinates of $x(b)$ in $\cN$. The elements $\fa$ and $\fb$ are two scalars in $\FF_q$. 

In particular, it consists of 5 convolution products, 2 component-wise 
products, 1 addition, and 3 subtractions between vectors of size $n$ in 
$\FF_q$.
\end{lemma}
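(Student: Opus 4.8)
The displayed identity for the multiplication tensor is \cite[Lemma 6]{CL09}; the plan is to recall why it takes this shape and then to obtain the ``in particular'' clause by direct inspection. Recall from Theorem~\ref{the:main:theo:couv-lerc} that the elliptic normal basis $\cN=\{\alpha,\alpha^q,\dots,\alpha^{q^{n-1}}\}$ is obtained by evaluating a fixed $\FF_q$-rational elliptic function along the $t$-orbit $b,b+t,\dots,b+(n-1)t$, and that the Frobenius $\phi$ sends $b+jt$ to $b+(j+1)t$; hence the Galois action on coordinate vectors is the cyclic shift $\sigma$, and any $\FF_q$-bilinear, Galois-equivariant operation on such vectors is built from shifts, component-wise products $\diamond$, and convolution products $\star$, convolution being precisely multiplication in the group algebra $\FF_q[\ZZ/n\ZZ]$ that records this action. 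The formula itself is derived in \cite{CL09} by combining two ingredients: (i) the linear relations among elliptic functions in the spaces $\cL(D)$ attached to a suitable divisor $D$ on $E$, which rewrite a product of two evaluated functions as an $\FF_q$-linear combination of the basis functions, thereby producing the Hadamard products together with the structural vectors $\overrightarrow{\cR}=(\fa f_{O,t}(R+jt)+\fb)_j$, $\overrightarrow{\cR}_x=(x_O(R+jt))_j$ and $\overrightarrow{\iota}$ (the coordinates of $x(b)$ in $\cN$); and (ii) the $t$-translation structure, which turns multiplication along the orbit of $t$ into convolution and, because $\overrightarrow{\cR}$ is invertible for the convolution product (a genericity condition for which the freedom in choosing $R$, i.e. the hypothesis $nR\neq O$, is exploited), permits the correction term to be divided out by the convolution inverse $\overrightarrow{\cR}^{-1}$. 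Assembling (i) and (ii) and collecting terms gives exactly the stated expression; for our purposes I would simply quote this derivation.

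For the counting clause I would argue purely by inspection, using that $\overrightarrow{\cR}$, $\overrightarrow{\cR}^{-1}$, $\overrightarrow{\cR}_x$ and $\overrightarrow{\iota}$ depend only on $\cN$ and are precomputed, and that scalar multiplications (by $\fa^2$) and the cyclic shifts $\sigma$ are free. Write $A=(\overrightarrow{x}-\sigma(\overrightarrow{x}))\diamond(\overrightarrow{y}-\sigma(\overrightarrow{y}))$ and $B=(\overrightarrow{\cR}\star\overrightarrow{x})\diamond(\overrightarrow{\cR}\star\overrightarrow{y})$. The subvector $A$ occurs verbatim in two places in the formula but is evaluated once, at the cost of the two subtractions $\overrightarrow{x}-\sigma(\overrightarrow{x})$ and $\overrightarrow{y}-\sigma(\overrightarrow{y})$ and one $\diamond$ product; $B$ is the only other $\diamond$ product, and it supplies the two convolutions $\overrightarrow{\cR}\star\overrightarrow{x}$ and $\overrightarrow{\cR}\star\overrightarrow{y}$; forming $B-(\overrightarrow{\cR}_x\fa^2)\star A$ uses one further convolution and the third subtraction; and the two remaining convolutions are those against $\overrightarrow{\cR}^{-1}$ and against $\overrightarrow{\iota}\fa^2$, while the single addition is the one joining $A$ to $\overrightarrow{\cR}^{-1}\star(B-(\overrightarrow{\cR}_x\fa^2)\star A)$. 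Adding up gives $5$ convolution products, $2$ component-wise products, $1$ addition and $3$ subtractions between vectors of size $n$ over $\FF_q$, as claimed.

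The one genuinely delicate step, and the one carried out in detail in \cite{CL09}, is ingredient (i): pinning down the auxiliary function $f_{O,t}$ and the exact linear relation in the spaces $\cL(D)$ so that the coefficient vectors come out to be precisely $\overrightarrow{\cR}$, $\overrightarrow{\cR}_x$ and $\overrightarrow{\iota}$ with the correct scalars $\fa,\fb$. By contrast the counting statement requires no elliptic-curve input whatsoever; the only care needed is to recognize that the common subexpression $A$ is shared and that the precomputed constant vectors and the shifts are not charged for.
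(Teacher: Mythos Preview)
The paper does not prove this lemma at all: it is stated as a quotation of \cite[Lemma~6]{CL09} and used as a black box, with only a remark on notational differences. Your proposal is therefore strictly more than what the paper offers, and it is essentially correct. Your sketch of ingredients (i) and (ii) is an accurate high-level account of the derivation in \cite{CL09}, and your explicit bookkeeping for the ``in particular'' clause is right: with $A=(\overrightarrow{x}-\sigma(\overrightarrow{x}))\diamond(\overrightarrow{y}-\sigma(\overrightarrow{y}))$ computed once and shared, one indeed gets $5$ convolutions, $2$ component-wise products, $3$ subtractions and $1$ addition.

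One tiny slip: you describe the single addition as ``joining $A$ to $\overrightarrow{\cR}^{-1}\star(B-(\overrightarrow{\cR}_x\fa^2)\star A)$'', but what is actually added is $(\overrightarrow{\iota}\fa^2)\star A$, not $A$ itself; this does not affect the count. It is also worth flagging that you (reasonably) do not charge for the scalar multiplications by $\fa^2$ nor for the shifts $\sigma$, and that $A$ is reused; making those conventions explicit is exactly what is needed to reconcile a na\"ive symbol count with the stated $5/2/1/3$.
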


The scalars $\fa$ and $\fb$ have the special property $$\fa\fc+n\fb=1,$$ where $\fc$ is also a constant defined by the 
sum of the elliptic functions $(f_{kt, (k+1)t})_{0\leq k\leq n-1}$. It is shown in \cite[Lemma 4]{CL09} 
that the sum of these functions is a constant.

\begin{remark}
We observe that some notation is different from \cite{CL09} and that 
in the previous lemma, we combined some of their results for simplicity 
of presentation; see \cite[Section 4]{CL09} for more details. In 
particular, the vector $\overrightarrow{\cR}$, that is central to the
complexity analysis is the evaluation at certain special points of 
an elliptic function $f$ to be defined in 
Theorem~\ref{The:Complexity:EllPeriod}.
\end{remark}

\subsection{Bounds on the complexity of elliptic normal bases}\label{sec:BCEP}

We show that the complexity of elliptic periods defined in Theorem 
\ref{the:main:theo:couv-lerc} depends (in some precise way) on a 
vector $\overrightarrow{\cR}$ derived from two $\FF_q$-rational 
points $t$ and $R$ over an elliptic curve $E$. The point $t$ is 
of order $n$ and $R$ lies outside of $E[n]$, the subgroup of 
$n$-torsion points of $E$ 
$$E[n] = \{ P \in E(\FF_q) \colon [n]P=0\}.$$ 

The following theorem gives a bound of the whole complexity of 
the elliptic normal basis in term of these two vectors.

\begin{theorem}\label{The:Complexity:EllPeriod}
With the same notation as before, let $p$ be a prime number and 
$q$ a power of $p$. Let $n\geq2$ be an integer such that 
$n_q\leq \sqrt{q}$. Then, there exists an elliptic curve $E$ 
over $\FF_q,$ a $\FF_q$-rational point $t$ of order $n$ and a 
point $R\in E(\FF_q)$ lying outside of $E[n]$.

Furthermore, there exists a normal basis 
${\cN}=\{\alpha_0, \alpha_1, \ldots, \alpha_{n-1}\}$ of 
$\FF_{q^n}/\FF_q$ with complexity lower and upper bounded by 
$$ 3+ \sum_{k=2}^{n-2}w(\overrightarrow{\cR}^{-1}\star\overleftarrow{\cR}_k)
\leq C({\cN}) \leq
3n + \sum_{k=2}^{n-2}w(\overrightarrow{\cR}^{-1}\star\overleftarrow{\cR}_k),$$
where the terms in the sum depend on $R$ and $t$.
\end{theorem}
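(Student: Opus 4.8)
The plan is to evaluate the multiplication formula of Lemma~\ref{mainlemma} at the coordinate vectors of the basis elements of $\cN$ and to count which nonzero entries survive. Write $e_i$ for the coordinate vector of $\alpha_i$ in $\cN$, and let $M(\overrightarrow{x},\overrightarrow{y})$ denote the vector produced by the formula of Lemma~\ref{mainlemma}, so that $M(e_0,e_j)$ is the coordinate vector of $\alpha_0\alpha_j=\sum_{k}t^{(k)}_{0j}\alpha_k$. The first step is the elementary reduction $C(\cN)=\sum_{j=0}^{n-1}w\big(M(e_0,e_j)\big)$. Indeed, applying the $q$-power Frobenius to $\alpha_i\alpha_j=\sum_k t^{(k)}_{ij}\alpha_k$ (and using $t^{(k)}_{ij}\in\FF_q$) gives $t^{(k')}_{i+1,j+1}=t^{(k'-1)}_{ij}$, hence $t^{(k)}_{ij}=t^{(0)}_{i-k,\,j-k}$ for all indices mod $n$; therefore $\sum_j w(M(e_0,e_j))=\#\{(j,k):t^{(k)}_{0j}\neq 0\}$ equals, after the bijective change of variables $(j,k)\mapsto(-k,\,j-k)$ on $(\ZZ/n\ZZ)^2$, the quantity $\#\{(i,j):t^{(0)}_{ij}\neq 0\}=C(\cN)$.

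The second step is to substitute $\overrightarrow{x}=e_0$ and $\overrightarrow{y}=e_j$ into the formula; two observations do all the work. First, the convolution identities $\overrightarrow{\cR}\star e_0=\overrightarrow{\cR}$ and $\overrightarrow{\cR}\star e_j=\sigma^{j}(\overrightarrow{\cR})$ yield $(\overrightarrow{\cR}\star\overrightarrow{x})\diamond(\overrightarrow{\cR}\star\overrightarrow{y})=\overrightarrow{\cR}\diamond\sigma^{j}(\overrightarrow{\cR})=\overleftarrow{\cR}_j$. Second, $\overrightarrow{x}-\sigma(\overrightarrow{x})=e_0-e_1$ is supported on $\{0,1\}$ and $\overrightarrow{y}-\sigma(\overrightarrow{y})=e_j-e_{j+1}$ is supported on $\{j,j+1\}$, so their component-wise product $D_j:=(\overrightarrow{x}-\sigma(\overrightarrow{x}))\diamond(\overrightarrow{y}-\sigma(\overrightarrow{y}))$ is supported on $\{0,1\}\cap\{j,j+1\}$, which is empty precisely when $2\le j\le n-2$, while for $j\in\{0,1,n-1\}$ one checks $D_0=e_0+e_1$, $D_1=-e_1$, $D_{n-1}=-e_0$. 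Consequently, for $2\le j\le n-2$ the two terms of the formula of Lemma~\ref{mainlemma} that carry the factor $D_j$ vanish identically, so $M(e_0,e_j)=\overrightarrow{\cR}^{-1}\star\overleftarrow{\cR}_j$, and these indices contribute exactly $\sum_{k=2}^{n-2}w(\overrightarrow{\cR}^{-1}\star\overleftarrow{\cR}_k)$ to $C(\cN)$.

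It then remains to estimate $w(M(e_0,e_j))$ for the three indices $j\in\{0,1,n-1\}$, which are distinct once $n\ge 3$. Each such $M(e_0,e_j)$ is the coordinate vector of the nonzero field element $\alpha_0\alpha_j$, hence $w(M(e_0,e_j))\ge 1$; and it is a vector of length $n$, hence $w(M(e_0,e_j))\le n$. Adding the three contributions gives a total between $3$ and $3n$, and combining with the second step produces the two stated bounds. The small cases are consistent: for $n=3$ the middle sum is empty and all of $j\in\{0,1,2\}$ lie in the first group, so $3\le C(\cN)\le 3n$; for $n=2$ the lower bound $3$ comes instead from $C(\cN)\ge 2n-1$ in Proposition~\ref{prop:opt:nbasis}. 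Finally, the existence of $E$, of the points $t\in E(\FF_q)$ of order $n$ and $R\in E(\FF_q)$ outside $E[n]$, and of the normal basis $\cN$ itself, is exactly the content of Theorem~\ref{the:main:theo:couv-lerc}.

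I expect the only real difficulty to be bookkeeping: fixing the index conventions for $\star$, $\diamond$, $\sigma$ and the left-arrow notation $\overleftarrow{\cR}_k$ so that $\overrightarrow{\cR}\star e_j$ is literally the shift $\sigma^j(\overrightarrow{\cR})$ and $(\overrightarrow{\cR}\star e_0)\diamond(\overrightarrow{\cR}\star e_j)$ is (up to reindexing of the summation range) literally $\overleftarrow{\cR}_j$, and then checking that the supports $\{0,1\}$ and $\{j,j+1\}$ overlap precisely for $j\in\{0,1,n-1\}$. Once the formula of Lemma~\ref{mainlemma} is written out on the pair $(e_0,e_j)$, the vanishing of two of its three summands and the crude bounds $1\le w(\cdot)\le n$ on the surviving three vectors are immediate.
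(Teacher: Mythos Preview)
Your proposal is correct and follows essentially the same route as the paper's proof: evaluate the multiplication tensor of Lemma~\ref{mainlemma} at $(\overrightarrow{\alpha_0},\overrightarrow{\alpha_k})$, observe that the component-wise product $(\overrightarrow{\alpha_0}-\sigma(\overrightarrow{\alpha_0}))\diamond(\overrightarrow{\alpha_k}-\sigma(\overrightarrow{\alpha_k}))$ vanishes for $2\le k\le n-2$, reduce the surviving term to $\overrightarrow{\cR}^{-1}\star\overleftarrow{\cR}_k$ via the shift identity $\overrightarrow{\cR}\star\overrightarrow{\alpha_k}=\sigma^{k}(\overrightarrow{\cR})$, and then crudely bound the three remaining rows between $1$ and $n$ each. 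Your write-up is in fact a bit more careful than the paper's, since you explicitly justify the identity $C(\cN)=\sum_j w(M(e_0,e_j))$ via the bijection $t^{(k)}_{ij}=t^{(0)}_{i-k,j-k}$ and you handle the degenerate cases $n=2,3$, both of which the paper leaves implicit.
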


\begin{proof}
The existence of the elliptic curve, the first part of the theorem, 
follows from Theorem~\ref{the:main:theo:couv-lerc}. We focus 
next on the nonzero entries of the multiplication table of the 
elliptic normal bases.

Let $A$ and $B$ be two distinct points on the elliptic curve $E$
and consider $\bar{\FF}_q$, the algebraic closure of $\FF_q$.
There exists an elliptic function $f_{A, B}$ having two poles 
at $A$ and $B$ given by  
\[ \begin{array}{llll}
	f_{A, B} : & E(\bar{\FF}_q) & \longrightarrow & \bar{\FF}_q\\
	& P & \longmapsto & \dfrac{y_A(P)-y(A-B)}{x_A(P)-x(A-B)}
\end{array} \]
where $y_A(P)$ and $x_A(P)$ represent the $y$-coordinate and the 
$x$-coordinate of the point $P-A$, respectively. The evaluation 
of the function $f_{A, B}$ at $P$ is merely the slope of the line 
passing through $P-A$ and $A-B$. Since $n_q\leq\sqrt{q}$, it is 
known from Lemma~\ref{mainlemma} 
that the multiplication tensor for the normal elliptic basis 
obtained in Theorem \ref{the:main:theo:couv-lerc} consists of 
$5$ convolution products, $2$ component-wise products, $1$ 
addition and $3$ subtractions between vectors of size $n$ in $\FF_q$. 

Indeed, and more precisely, for two vectors $\overrightarrow{x}$ 
and $\overrightarrow{y}$ representing two elements $x$ and $y$ to 
be multiplied in the elliptic normal basis $\cN$, the coordinates 
of the product $x\times y$ are given by 
\begin{eqnarray*}
&& \hspace{-0.25cm} 
(\overrightarrow{\iota} \mathfrak{a}^2)\star((\overrightarrow{x}
-\sigma(\overrightarrow{x}))\diamond (\overrightarrow{y}
-\sigma(\overrightarrow{y})) \\
&+& \hspace{-0.25cm} 
\overrightarrow{\cR}^{-1}\star((\overrightarrow{\cR}\star 
\overrightarrow{x})\diamond (\overrightarrow{\cR}\star \overrightarrow{y})
-(\overrightarrow{\cR}_x \mathfrak{a}^2)\star
((\overrightarrow{x}-\sigma(\overrightarrow{x}))\diamond
(\overrightarrow{y}-\sigma(\overrightarrow{y}))))),
\end{eqnarray*}
where $\overrightarrow{\cR}=(\fa f_{O, t}(R+jt)+\fb)_{0\leq j\leq n-1}$,
$\overrightarrow{\cR}_x=(x_O(R+jt))_{0\leq j\leq n-1}$, and 
$\overrightarrow{\iota}$ represents the coordinates of $\alpha_0^2$ 
in $\cN$. The elements $\fa$ and $\fb$ are two scalars in $\FF_q$. 
For $0\leq k\leq n-1$, the cross-product $\alpha_0 \alpha_k$ is given by
\begin{eqnarray}\label{eq:enb-mult-tensor}
	&& \hspace{-0.35cm}
	\overrightarrow{\iota}\star \left(\fa^2(\overrightarrow{\alpha_0}-
	\sigma(\overrightarrow{\alpha_0}))\diamond((\overrightarrow{\alpha_k}-
	\sigma(\overrightarrow{\alpha_k}))\right)\\
	&+& \hspace{-0.35cm}
	\overrightarrow{\cR}^{-1}\star\left((\overrightarrow{\cR}\star 
	\overrightarrow{\alpha_0})\diamond 
	(\overrightarrow{\cR}\star \overrightarrow{\alpha_k}) 
	-\overrightarrow{\cR}_x \star\left(\fa^2(\overrightarrow{\alpha_0}-
	\sigma(\overrightarrow{\alpha_0}))\diamond((\overrightarrow{\alpha_k}-
	\sigma(\overrightarrow{\alpha_0}))\right)\right). \nonumber
\end{eqnarray}

We observe that $\overrightarrow{\alpha}_0=(1, 0, \ldots, 0), \ 
\sigma(\overrightarrow{\alpha}_0)=(0, 1, \ldots, 0), \text{ and } \alpha_k=(0,0,
\ldots,1,0,\ldots,0) $. Hence, we have
\begin{eqnarray*}
\overrightarrow{\alpha}_0-\sigma(\overrightarrow{\alpha}_0) 
&=& (1,-1,0,0,\ldots,0,0),\\
\overrightarrow{\alpha}_1-\sigma(\overrightarrow{\alpha}_1) 
&=& (0,1,-1,0,\ldots,0,0),\\
\overrightarrow{\alpha}_2-\sigma(\overrightarrow{\alpha}_2) 
&=& (0,0,1,-1,0,\ldots,0),\\
&\vdots& \\
\overrightarrow{\alpha}_{n-1}-\sigma(\overrightarrow{\alpha}_{n-1}) 
&=& (-1,0,0,0\ldots,0,1).
\end{eqnarray*}
When $2\leq k\leq n-2$ we have
\begin{eqnarray*}
&& (\overrightarrow{\alpha_0}-\sigma(\overrightarrow{\alpha_0}))\diamond
((\overrightarrow{\alpha_k}-\sigma(\overrightarrow{\alpha_k})) \\
&=&
(1,-1,0,\ldots,0)\diamond(0,0,\ldots,1,-1,0,\ldots,0)=(0,\ldots,0).
\end{eqnarray*}
Since $\overrightarrow{\alpha_0}=(1,0,\ldots,0)$ can be viewed as the
neutral element of the convolution product, the study of the 
complexity of the normal basis is basically reduced to the analysis
of the vector
$$\overrightarrow{\cR}^{-1}\star\left(\overrightarrow{\cR}\diamond 
(\overrightarrow{\cR}\star \overrightarrow{\alpha_k})\right). $$
To reduce the vector 
$\overrightarrow{\cR}^{-1}\star\left(\overrightarrow{\cR}\diamond 
(\overrightarrow{\cR}\star \overrightarrow{\alpha_k})\right)$ we remark
that the convolution product $\overrightarrow{\cR}\star \overrightarrow{\alpha_k}$
is just a $k$-right cyclic shift of the vector $\overrightarrow{\cR}$. 
In other words, if $\overrightarrow{\cR} = (r_0, r_1, \ldots, r_{n-1})$, then
$$\overrightarrow{\cR}\star \overrightarrow{\alpha_k}=(r_{n-k}, r_{n-k+1}, 
r_{n-k+2}, \ldots, r_{2n-k-1}),$$
where indices are computed modulo $n$. Therefore
\begin{eqnarray*}
&& \overrightarrow{\cR}^{-1}\star\left(\overrightarrow{\cR}\diamond 
(\overrightarrow{\cR}\star \overrightarrow{\alpha_k})\right) \\
&=& \overrightarrow{\cR}^{-1}\star ((r_0, r_1, \ldots, r_{n-1})\diamond 
(r_{n-k}, r_{n-k+1}, r_{n-k+2}, \ldots, r_{2n-k-1}))\\
&=& \overrightarrow{\cR}^{-1} \star \overleftarrow{\cR}_k.
\end{eqnarray*}
For the $n-3$ rows in the middle of the multiplication table of $\cN$, 
we get the total complexity 
$$\sum_{k=2}^{n-2}w(\overrightarrow{\cR}^{-1} \star \overleftarrow{\cR}_k).$$
It remains to note that the complexity of the first two rows and the 
last row is lower and upper bounded by $3$ and $3n$, respectively, 
since each of the remaining rows must have at least one 
nonzero element. 
\end{proof}

We observe that from the points $R$ and $t$, we know the exact 
complexity of the matrix derived from the $n-3$ rows at the middle 
of the multiplication table of $\cN$. Next, we study the weight of 
the vectors involved in this complexity estimation.

\subsection{Weight of special vectors}\label{sec:CSV}

We study special vectors, that is, vectors constructed from elliptic 
curve properties and on which the efficiency of the elliptic normal basis 
depends.

In the previous subsection, we showed that the complexity of the elliptic
normal bases, with fast multiplication algorithm, designed by Couveignes 
and Lercier \cite{CL09} depends heavily on the vectors 
$\overrightarrow{\cR}, \ \overrightarrow{\cR}_x, \ \overleftarrow{\cR}_k,
\ \overrightarrow{\cR}^{-1}$ and $\overrightarrow{\iota}$, the
expression of $x(b)$ in the normal basis.

In this section, we give bounds on the weights of the first three vectors 
showing that all of them are very close to $n$. Unfortunately, it is 
difficult to predict the weights 
$w(\overrightarrow{\cR}^{-1}*\overleftarrow{\cR}_{k})_{2\leq k\leq n-2}$
needed in Theorem~\ref{The:Complexity:EllPeriod}. However, we conjecture
that those weights are also very high. We show this with some examples 
in the next section.

The vectors $\overrightarrow{\iota}$
and $\overrightarrow{\cR}_x$ are involved in the computation of $C_{\cN}$ 
when $k=1$ and $n-1$. Indeed, if $k=1$, then 
Equation (\ref{eq:enb-mult-tensor}) becomes
\begin{align}\label{eq:row-one}
\alpha_0\alpha_1 =&\overrightarrow{\iota}\star \left(0,-\fa^2,0,\ldots,0\right)+
\overrightarrow{\cR}^{-1}\star\left(\overleftarrow{\cR}_1 
-\overrightarrow{\cR}_x \star\left(0,-\fa^2,0,\ldots,0\right)\right),
\end{align}
and if $k=n-1$, we have
\begin{align}\label{eq:row-last}
\alpha_0\alpha_{n-1} =&\overrightarrow{\iota}\star \left(-\fa^2,0,\ldots,0\right)+
\overrightarrow{\cR}^{-1}\star\left(\overleftarrow{\cR}_{n-1} 
-\overrightarrow{\cR}_x \star\left(-\fa^2,0,\ldots,0\right)\right)
\end{align}

To have a better understanding of $C_{\cN}$, we look at the
structure of the vectors $\overrightarrow{\cR}, \ \overleftarrow{\cR}_k$ and
$\overrightarrow{\cR}_x$. We need the following proposition that plays a
role in our analysis.

\begin{proposition}[\cite{Silv09}, Proposition II.3.1]
Let $C$ be a smooth curve defined over a field $\KK$ and $f\in\bar{\KK}(C)^*$,
with $\bar{\KK}(C)$ the function field of $C$ over $\bar{\KK}$. Then $\mathrm{deg(div}(f))=0$.
\end{proposition}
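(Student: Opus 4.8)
The proposition is the classical statement that a nonzero rational function on a smooth (projective) curve has exactly as many zeros as poles, counted with multiplicity. The plan is to recall the standard proof, which realises $f$ as a finite morphism to $\mathbb{P}^1$. If $f$ is constant, i.e.\ $f\in\bar{\KK}^*$, then $\mathrm{div}(f)=0$ and there is nothing to prove, so assume $f$ is non-constant. Then $f$ determines a non-constant rational map from $C$ to $\mathbb{P}^1$, and since $C$ is a smooth curve and $\mathbb{P}^1$ is projective, this rational map extends to a morphism $f\colon C\to\mathbb{P}^1$; being a non-constant morphism of smooth projective curves it is finite and surjective, of degree $m:=[\bar{\KK}(C):\bar{\KK}(f)]<\infty$.

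Next I would rewrite $\mathrm{div}(f)$ as a difference of pullbacks of points. Writing $g$ for the standard coordinate on $\mathbb{P}^1$ (so that $f$ is the pullback of $g$), and for $P\in C$ setting $e_P(f):=\mathrm{ord}_P(f^*t)$ with $t$ a uniformizer at $f(P)$, multiplicativity of valuations under pullback gives $\mathrm{ord}_P(f)=e_P(f)\cdot\mathrm{ord}_{f(P)}(g)$. Since $g$ has a simple zero at $0$, a simple pole at $\infty$, and neither zero nor pole elsewhere, this yields
$$\mathrm{div}(f)=f^*\big((0)\big)-f^*\big((\infty)\big),\qquad f^*\big((Q)\big):=\sum_{P\in f^{-1}(Q)}e_P(f)\,(P).$$
Thus it suffices to prove that $\deg f^*((Q))=m$ for every $Q\in\mathbb{P}^1$, i.e.\ that $\sum_{P\mapsto Q}e_P(f)=m$.

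This identity is the real content of the proof, and the step I expect to be the main obstacle. I would argue locally: let $\mathcal{O}=\mathcal{O}_{\mathbb{P}^1,Q}$, a discrete valuation ring with maximal ideal $\mathfrak{m}$ and residue field $\bar{\KK}$, and let $B$ be the semilocal ring of $C$ at the finite set $f^{-1}(Q)$, which is the integral closure of $\mathcal{O}$ in $\bar{\KK}(C)$. Since $\mathcal{O}$ is a principal ideal domain and $B$ is a finitely generated torsion-free $\mathcal{O}$-module of rank $m$, $B$ is free of rank $m$, so $\dim_{\bar{\KK}}B/\mathfrak{m}B=m$. On the other hand, by the Chinese Remainder Theorem $B/\mathfrak{m}B\cong\prod_{P\mapsto Q}\mathcal{O}_{C,P}/\mathfrak{m}\mathcal{O}_{C,P}$; each $\mathcal{O}_{C,P}$ is a discrete valuation ring in which $\mathfrak{m}$ generates $\mathfrak{m}_P^{e_P(f)}$ and whose residue field is again $\bar{\KK}$, so the factor at $P$ has $\bar{\KK}$-dimension $e_P(f)$. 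Comparing dimensions yields $\sum_{P\mapsto Q}e_P(f)=m$. The delicate point is the module-finiteness of $B$ over $\mathcal{O}$: in characteristic zero (or whenever $\bar{\KK}(C)/\bar{\KK}(f)$ is separable) this is the classical finiteness of integral closure, and the general, possibly inseparable, case goes through because $\bar{\KK}$ is perfect.

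Finally, applying the identity with $Q=0$ and $Q=\infty$ gives $\deg f^*((0))=\deg f^*((\infty))=m$, hence $\deg\mathrm{div}(f)=m-m=0$. An alternative, morphism-free route would fix a transcendence basis $x$ of $\bar{\KK}(C)/\bar{\KK}$, verify the statement directly for the rational function field $\bar{\KK}(x)$ (where it is immediate, since $x-a$ has a single simple zero and $x$ a single simple pole), and then transfer it to the finite extension $\bar{\KK}(C)$ via the norm map $N_{\bar{\KK}(C)/\bar{\KK}(x)}$ and its compatibility with the divisor map; but this merely repackages the same ramification computation, so I would present the geometric version above.
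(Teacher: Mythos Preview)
The paper does not supply its own proof of this proposition; it is simply quoted from Silverman \cite{Silv09} as a classical input to the subsequent weight analysis. Your argument is correct and is essentially the standard proof found in Silverman (viewing $f$ as a finite morphism to $\mathbb{P}^1$ and showing every fibre has the same degree), so there is nothing to compare it against.
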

The degree of $\mathrm{div}(f)$, denoted by $\mathrm{deg(div}(f))$, 
is the sum of the number of zeroes and poles of $f$. As a consequence, 
this proposition implies that the number of zeroes of a function $f$ 
defined over an elliptic curve $E$ is equal (in terms of points) to 
the number of its poles counted with the opposite of their order of 
multiplicity. We note that if $P$ is a pole of a function $f$, then 
the order $\nu(f)$ of the vanishing of $f$ at $P$ is a negative 
integer. Hence, the poles are counted with a negative sign.

\bigskip

\noindent
{\bf Weights $w(\overrightarrow{\cR})$ and $w(\overleftarrow{\cR}_k)$}:
Let $f_0(R+jt) := \fa f_{O, t}(R+jt)+\fb$. 
Since $\overrightarrow{\cR}=(f_0(R+jt))_{0\leq j\leq n-1}$, then the number 
of nonzero elements in $\overrightarrow{\cR}$ depends on the evaluation
of $f$ at the points $R+jt$.
Recall that $\fa$ and $\fb$ are two scalars in $\FF_q$ such that $\fa\fc+n\fb=1$.
If $\fb=0$, since $\fa$ is a nonzero element, $f_0(R+jt)=0$ if and only
if $f_{O, t}(R+jt)=0$. Since $f_{O, t}$ has only two poles at $O$ and $t$ then 
it has at most two zeroes that might be in the set $(R+jt)_{0\leq j\leq n}$.
Hence, we have that if $\fb=0$ then $w(\overrightarrow{\cR})\geq n-2$. 
Following the same reasoning we obtain that if $\fb\neq0$ then $w(\overrightarrow{\cR})\geq n-1$.

We can easily derive the weight of $\overleftarrow{\cR}_k$ from 
$w(\overrightarrow{\cR})$. By definition $\overleftarrow{\cR}_k$ is the 
component-wise product between $\overrightarrow{\cR}$ and its $k$-cyclic 
shift. Thus, if $\overrightarrow{\cR}$ has one zero element, then 
$w(\overleftarrow{\cR}_k)=n-2$, and if $\overrightarrow{\cR}$ has two zeroes,
then $w(\overleftarrow{\cR}_k)\geq n-4$.

\bigskip

\noindent
{\bf Weight $w(\overrightarrow{\cR}_x)$}: We have that 
$\overrightarrow{\cR}_x=(x_O(R+jt))_{0\leq j\leq n-1}$. This vector is given by the $x$-coordinates 
of $n \ \FF_q$-rational points $R+jt$ over $E$. If $\mathrm{char}(\FF_q)\neq 
2, 3$, then the Weierstrass form of $E$ can be reduced to the following 
equation
\begin{align}
y^2=x^3+ax+b,
\end{align}
with $a, b\in \FF_q$. Every point $P\in E$ on the $y$-axis has $\pm\sqrt{b}$ 
as $y$-coordinate. Since we have two possibilities, the number of zero-elements
of $\overrightarrow{\cR}_x$ can not exceed $2$, in other words 
$w(\overrightarrow{\cR}_x)\geq n-2$.

\bigskip

We summarize the preceding discussion in the following proposition, which gives 
some bounds on the weight of the considered vectors. We suppose the 
conditions of Theorem \ref{The:Complexity:EllPeriod} are satisfied.
\begin{proposition}\label{Lem:Weight:Vectors}
Let $E$ be an elliptic curve over $\FF_q$ giving rise to the two vectors
$\overrightarrow{\cR}$ and $\overrightarrow{\cR}_x$. Then,  
$w(\overrightarrow{\cR}_x)\geq n-2$, $w(\overrightarrow{\cR})\geq n-2$,
and
$$ 
w(\overleftarrow{\cR}_k)\geq
\begin{cases}
\vspace{0.2cm}
n-2,  \text{ if } w(\overrightarrow{\cR}) \text{ has one zero element};\\
n-4,  \text{ if } w(\overrightarrow{\cR}) \text{ has two zeroes.}
\end{cases}
$$
\end{proposition}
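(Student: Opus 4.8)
The plan is to establish the three bounds separately, each reducing to a count of how many of the $n$ points $R+jt$, $0\le j\le n-1$, can lie in a certain exceptional locus on $E$. First I would dispose of $w(\overrightarrow{\cR}_x)\ge n-2$. Under the running hypotheses, when $\mathrm{char}(\FF_q)\ne 2,3$ one may put $E$ in reduced Weierstrass form $y^2=x^3+ax+b$; then $x_O(R+jt)=0$ forces $R+jt$ to lie on the $y$-axis, so its $y$-coordinate is a square root of $b$, of which there are at most two in $\bar{\FF}_q$. Hence at most two of the points $R+jt$ have vanishing $x$-coordinate, and $w(\overrightarrow{\cR}_x)\ge n-2$. (The same conclusion holds in every characteristic, since $P\mapsto x(P)$ is a degree-two map $E\to\mathbb{P}^1$, so the fibre over $0$ contains at most two points.)

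Next I would prove $w(\overrightarrow{\cR})\ge n-2$. Recall $\overrightarrow{\cR}=(f_0(R+jt))_{0\le j\le n-1}$ with $f_0=\fa f_{O,t}+\fb$ and $\fa\ne 0$. The input is the degree-zero-divisor statement (\cite{Silv09}, Proposition II.3.1): a nonzero rational function on a smooth curve has as many zeros as poles, counted with multiplicity. The slope function $f_{O,t}$ has exactly two poles, both simple, at $O$ and at $t$, so $f_0$ has the same two poles, hence at most two zeros when $\fb=0$; when $\fb\ne 0$ I rewrite $f_0(P)=0$ as $(f_{O,t}+\fb/\fa)(P)=0$, again a function with those two poles and hence at most two zeros. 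In either case at most two of the points $R+jt$ annihilate $f_0$, so $w(\overrightarrow{\cR})\ge n-2$; a slightly finer look at the case $\fb\ne 0$, as in the discussion preceding the statement, improves this to $n-1$.

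The bound on $w(\overleftarrow{\cR}_k)$ is then purely combinatorial and uses only the number of zeros of $\overrightarrow{\cR}$. Writing $\overrightarrow{\cR}=(r_0,\dots,r_{n-1})$, the $i$-th coordinate of $\overleftarrow{\cR}_k$ is $r_i\,r_{i-k}$ with indices modulo $n$, which vanishes precisely when $r_i=0$ or $r_{i-k}=0$. If $\overrightarrow{\cR}$ has a single zero, at index $i_0$, then $\overleftarrow{\cR}_k$ vanishes only at the two distinct indices $i_0$ and $i_0+k$ (distinct since $2\le k\le n-2$), so $w(\overleftarrow{\cR}_k)=n-2$. If $\overrightarrow{\cR}$ has two zeros, at indices $i_0$ and $i_1$, then $\overleftarrow{\cR}_k$ can vanish only at the at most four indices $i_0,i_1,i_0+k,i_1+k$, whence $w(\overleftarrow{\cR}_k)\ge n-4$.

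The only genuinely delicate point, and the one I would write out carefully, is the claim that $f_{O,t}$ has exactly two simple poles. Using a local parameter $T$ at $O$, where $x\sim T^{-2}$ and $y\sim T^{-3}$, the slope $(y-y(-t))/(x-x(-t))\sim T^{-1}$, so $O$ is a simple pole. At $P=t$ the denominator $x(P)-x(-t)$ has a simple zero (as $t\ne -t$, since $t$ has order $n\ge 3$) while the numerator equals $y(t)-y(-t)\ne 0$, giving a second simple pole. The apparent indeterminacy at $P=-t$ is removable, the value being the tangent slope at $-t$, so it produces no pole; the pole divisor therefore has degree two, and everything else is bookkeeping.
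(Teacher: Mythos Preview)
Your argument is correct and follows essentially the same route as the paper's own discussion preceding the proposition: bound the zeros of $x$ via the reduced Weierstrass form, bound the zeros of $\fa f_{O,t}+\fb$ via the degree-zero divisor principle using that $f_{O,t}$ has exactly two simple poles, and then handle $\overleftarrow{\cR}_k$ by the obvious combinatorial count on zero indices and their shifts. Your added local-parameter verification that $f_{O,t}$ has simple poles at $O$ and $t$ (and a removable singularity at $-t$), together with the characteristic-free remark that $x\colon E\to\mathbb{P}^1$ has degree two, are welcome elaborations but do not depart from the paper's method.
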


The main consequence of this proposition is that most of the vectors 
involved in the multiplication are highly non-sparse. It remains to 
understand the weights of the $n-3$ convolutions
$w(\overrightarrow{\cR}^{-1}*\overleftarrow{\cR}_{k})_{2\leq k\leq n-2}$.

\section{Examples of elliptic normal basis}\label{sec:EENBT}

In order to have a practical understanding of our results in this 
paper, we give three examples of elliptic normal bases using 
different parameters and/or extensions. For the sake of better 
illustration, we work with extensions of small degrees. All these 
examples were built using a Magma package which contains two main 
functions. 
\begin{enumerate}
\item The function {\it ENBparamsComputation} which takes as
input a finite field $\FF_q$ and an extension degree $n$ and 
then returns (if possible) all the necessary parameters to the 
construction of an elliptic normal basis of $\FF_{q^n}$ over $\FF_q$.
\item The function {\it ENBcomplexityBounds} which computes the 
weight of the special vectors and gives a lower and an upper bound 
on the complexity of the basis.

\end{enumerate}
These functions make use of several preliminary functions; for more 
details on this package, see \cite{MagmaPack}.

An interesting point of working with elliptic curves is that for 
a given finite field $\FF_q$ we can define many such curves. 
This can enlarge significantly the number of cases where a normal 
basis with good properties (fast multiplication and/or low complexity) 
exist. We highlight this fact in Subsection \ref{sub:SE} by giving 
two different elliptic normal bases on the same extension 
$\FF_{7^6}$ over $\FF_7$. The special vectors (on which the 
complexity depends heavily) of one of the bases are full of nonzero 
elements, which leads to a bad complexity. In the last example,
the change of an elliptic curve gives some special vectors with 
good weight, which leads to a lower bound on the complexity of 
the constructed basis. 

\subsection{First example}\label{sub:FE}

\indent Let us consider the finite field $\FF_{13}$ and degree 
extension $n=7$. In this subsection, we give a practical example
of an elliptic normal basis, determine the weight of its special 
vectors and give a bound on its complexity. We work with the 
elliptic curve 
$$E : y^2+4xy+9y = x^3+x^2+3x+8,$$ 
of order $14$ over $\FF_{13}$ and the subgroup $\sT$ generated by 
the point $t=(0, 10)$ of order $7$. The quotient isogeny $E/\sT$ 
is defined by the equation 
$$y^2+4xy+9y = x^3+x^2+6,$$ 
and its 
rational point $a=(11, 11)$ generates an irreducible preimage from 
which we construct the extension $\FF_{13^7}$. The generic point
$$b=(\tau^5 + 10\tau^4 + 9\tau^3 + 9\tau^2 + 6\tau + 8, 12\tau^6 + 3\tau^5 + 10\tau^4 + \tau^3 + 10\tau^2 + 11\tau + 1)$$ 
is an $\FF_{13^7}$-rational 
point of $E$. The evaluation of the functions $\fa f_{kt, (k+1)t}+\fb$ 
at this point $b$ gives the elliptic normal basis 
$\cN_1 = \{\alpha_0, \alpha_1, \ldots, \alpha_6 \}$ where 
\begin{align*}
\alpha_0 & = 12\tau^6 + 9\tau^3 + \tau^2 + 12\tau + 5,\\
\alpha_1 & = 12\tau^6+3\tau^3+8\tau^2+7\tau,\\
\alpha_2 & = 3\tau^5+7\tau^4+8\tau^3+8\tau^2+9\tau+5,\\
\alpha_3 & = \tau^6+2\tau^5+9\tau^4+8\tau^3+11\tau^2+3\tau+3,\\
\alpha_4 & = \tau^6+11\tau^5+5\tau^4+9\tau^3+11\tau^2+10\tau+3,\\
\alpha_5 & = \tau^6+\tau^5+3\tau^4+7\tau^3+7\tau^2+7\tau+5,\\
\alpha_6 & = 12\tau^6+9\tau^5+2\tau^4+8\tau^3+6\tau^2+4\tau+11.
\end{align*}
In this example, the scalars $\fc, \fa,$ and $\fb$ in Lemma \ref{mainlemma} are respectively given by $11, 6$ and $0$, and only
$\fa$ is needed in the computations of this section. To compute the special vectors 
and determine the bound on the complexity $C(\cN_1)$ of $\cN_1$ 
we define the $\FF_{13}$-rational point $R=(9, 0)$ over $E$. The point $R$ is outside
of $\sT$. The complexity $C(\cN_1)$ of $\cN_1$ satisfies $$25\leq C(\cN_1)\leq 43.$$
These two bounds are derived from the weight of the special vectors
in Table~\ref{table:specvects1}.
\begin{table}[ht]
\centering 
\vspace{0.3cm}

\begin{tabular}{cc||c} 
\hline 
& & 
($\overrightarrow{\cR}^{-1}*\overleftarrow{\cR}_k)_{2\leq k\leq n-2}$ \\ [0.5ex] 
\hline 
$\overrightarrow{\cR}_x$ & (9, 3, 6, 1, 6, 3, 9) & (3, 5, 3, 11, 11, 11, 11) \\ 
$\overrightarrow{\cR}$   & (4, 0, 8, 10, 10, 8, 0) & (6, 1, 1, 6, 0, 0, 0) \\
$\overrightarrow{\cR}^{-1}$ & (12, 8, 6, 0, 0, 8, 8) & (6, 0, 0, 0, 6, 1, 1) \\
&    &    (3, 11, 11, 11, 11, 3, 5) \\
\hline 
\end{tabular}

\caption{Special vectors of $\cN_1$.}
\label{table:specvects1} 
\end{table}

Using our Magma function, \textit{ENB3RowsWeight}, that we defined for
these toy examples, we can compute the coordinates of $\alpha_0^2$ in 
$\cN_1$, the vector $\overrightarrow\iota$, and by extension the exact 
complexity of $\cN_1$. We have that 
$(\overrightarrow{\alpha_0^2})=(9, 6, 3, 11, 1, 5, 10)$ and 
$\overrightarrow{\iota}=(10, 10, 9, 5, 1, 7, 1)$. Since $\fa=6$, then 
by Equations (\ref{eq:row-one}) and (\ref{eq:row-last}), we have
$$\alpha_0\alpha_1 = (9, 11,  6,  4, 12,  6, 10) \text{ and } 
  \alpha_0\alpha_{n-1} = (10, 6, 4, 12, 6, 10, 10).$$
The three remaining rows i.e., $\alpha_0^2, \ \alpha_0\alpha_1,$ and 
$\alpha_0\alpha_{n-1}$ are full of non-zero elements, then $\cN_1$ 
achieves the upper bound $43$. We note that the sum defined by the 
$4$ rows at the middle of the multiplication table is equal to $22$.

\subsection{Second example}\label{sub:SE}
In this subsection, we construct two elliptic normal bases of $\FF_{7^6}$ over 
$\FF_{7}$ using the elliptic curves $$E_2 : y^2+3xy+2y = x^3+x^2+2x+4$$ and
$$E_3 : y^2+3xy+4y = x^3+6x^2+1$$ of order $12$ over $\FF_7$.  
The aim is to show how the equation of the chosen curve can affect the 
complexity of the derived basis.

\subsubsection{Analysis of the elliptic normal basis from $E_2$}
The point $t_2=(2, 2)$ generates a subgroup $\sT_2\subset E_2$ of order $6$. 
With the quotient isogeny $E_2'=E_2/\sT_2$ defined by $$y^2+3xy+2y=x^3+x^2+2x+4,$$
we find the point $a_2=(4, 3)$ with irreducible preimage that generates the
finite field $\FF_{7^6}$. We set $$b_2=(5\tau^5+5\tau^4+2\tau^3+2\tau^2+\tau+4,
6\tau^5+\tau^4+2\tau^3+2\tau^2+3\tau+1)$$ and we find the elliptic normal basis
$\cN_2$ given by $\alpha=2\tau^4+2\tau^3+\tau$ and its conjugates. With the
point $R=(4, 1)$, lying outside of $E_2[6](\FF_7)$, we establish 
Table~\ref{table:specvects2} that contains the special vectors of $\cN_2$.

\begin{table}[ht]
\centering 

\begin{tabular}{cc||c} 
\hline 
& & 
($\overrightarrow{\cR}_2^{-1}*\overleftarrow{\cR}_{2_k})_{2\leq k\leq n-2}$ \\ [0.5ex] 
\hline
$\overrightarrow{\cR}_{2x}$ & (4, 4, 3, 1, 1, 3) &  (5, 6, 5, 4, 4, 4) \\ 
$\overrightarrow{\cR}_2$ & (4, 2, 4, 0, 5, 0) &  (4, 3, 3, 4, 3, 3) \\
$\overrightarrow{\cR}_2^{-1}$ & (1, 0, 3, 0, 1, 3) & (5, 4, 4, 4, 5, 6)  \\
\hline 
\end{tabular}
\caption{Special vectors of $\cN_2$.}
\label{table:specvects2} 
\end{table}

Table \ref{table:specvects2} gives an example where the special vectors, 
that bound the complexity, are full of nonzero elements. Thus, without 
surprise, this leads to higher bounds on the complexity of $\cN_2$
$$21\leq C(\cN_2)\leq31.$$
Using our function \textit{ENB3RowsWeight}, we found that 
$\alpha_0^2=(4, 2, 6, 0, 2, 0)$ and $\overrightarrow{\iota}=(5, 1, 5, 1, 0, 2)$. 
Since $\fa=2$, using Equations (\ref{eq:row-one}) and (\ref{eq:row-last}), 
we have
$$\alpha_0\alpha_1 = (3, 6, 4, 0, 2, 1) \text{ and } 
  \alpha_0\alpha_{n-1} = (5, 0, 4, 0, 2, 0).$$
The number of non-zero elements in the three remaining rows is equal to 
$12$. Thus, the total complexity of $\cN_2$ is 30. In the next section, 
using different parameters, we give an elliptic normal basis that has a 
better lower bound and a better complexity.

\subsubsection{Analysis of the elliptic normal basis from $E_3$}\label{subsubsec:AENBE3}

In this example, we deal with the elliptic curve $E_3$. The torsion point of
order $6$, the isogenous curve $E_3'$, the point $a_3$ with irreducible preimage,
and the generic point $b_3$ are respectively given by
\begin{align*}
t_3  & =  (4, 5), \\
E_3' & :  y^2+3xy+4y = x^3+6x^2+4x+5, \\ 
a_3  & =  (0, 1), \\
b_3  & =  (4\tau^5+3\tau^4+2\tau^3+5\tau^2+3\tau+6, 5\tau^4+6\tau^3+\tau^2+2\tau+5).
\end{align*}
The above parameters give an elliptic normal basis $\cN_3$ generated by the normal element
$$\alpha_3 = 3\tau^5+3\tau^4+3\tau^3+3\tau^2+1.$$
Using the point $R_3=(1, 1)$, that lies outside of $E_3[6](\FF_7)$, we find the
special vectors in Table~\ref{table:specvects3}.

\begin{table}[ht]
\centering 
\begin{tabular}{cc||c} 
\hline 
& & 
($\overrightarrow{\cR}_3^{-1}*\overleftarrow{\cR}_{3_k})_{2\leq k\leq n-2}$ \\ [0.5ex] 
\hline
$\overrightarrow{\cR}_{3x}$ & (1, 1, 6, 2, 2, 6) & (3, 4, 3, 0, 0, 0) \\ 
$\overrightarrow{\cR}_3$ & (2, 5, 2, 1, 4, 1) & (3, 0, 0, 3, 0, 0) \\
$\overrightarrow{\cR}_3^{-1}$ & (0, 3, 1, 3, 0, 1) & (3, 0, 0, 0, 3, 4) \\ 
\hline 
\end{tabular}
\caption{Special vectors of $\cN_3$.}
\label{table:specvects3} 
\end{table}

We observe that the weights of the special vectors 
$(\overrightarrow{\cR}_3^{-1}*\overleftarrow{\cR}_{3_k})_{2\leq k\leq n-2}$
in Table \ref{table:specvects3} are sparser than those reported in 
Tables \ref{table:specvects1} and \ref{table:specvects2}. This leads 
to a better complexity bound for the elliptic normal basis $\cN_3$
$$11\leq C(\cN_3)\leq26.$$

For this example, $\alpha_0^2=(4, 6, 0, 1, 2, 0)$ and the precomputed vector
$\overrightarrow\iota$ that corresponds to the $x-$coordinate of $b_3$ is 
equal to $(1, 3, 1, 0, 0, 1)$. Then, by Equations (\ref{eq:row-one}) and 
(\ref{eq:row-last}), we have 
$$\alpha_0\alpha_1 = (6, 0, 2, 1, 5, 0) \text{ and } 
  \alpha_0\alpha_{n-1} = (6, 2, 1, 5, 0, 0),$$
and the complexity of $\cN_3$ is $20$. We comment that the sum depending 
on $R_3$ and $t_3$ is equal to $8$. This example shows that even if the 
first two rows $\alpha_0^2$ and $\alpha_0\alpha_1$, and the last row 
$\alpha_0\alpha_{n-1}$ have many nonzero elements, we can end up with a 
better bound and a better complexity if the sum defined by the rows in 
the middle of the multiplication table is small.


\section{Conclusions and further work} \label{sec:KNRP}

Looking at our results in Section \ref{sec:CEP}, 
we observe that it will be very hard to get a low complexity normal 
basis, as designed in \cite{CL09}, if we can not ensure low weight 
vectors from inverses and convolution products between some special 
vectors. This likely involves that vectors with prescribed low 
weight are needed in the computations. To have a concrete view of 
this analysis let us consider the matrices 
$$M_2=
\begin{pmatrix}
0 & 1 & 1 & 0 & 0 & 0\\
6 & 0 & 2 & 0 & 6 & 0\\
0 & 3 & 0 & 0 & 3 & 0\\
2 & 0 & 6 & 0 & 6 & 0\\
1 & 1 & 0 & 0 & 0 & 0\\
2 & 4 & 2 & 0 & 4 & 0
\end{pmatrix}
$$
and
$$M_3=
\begin{pmatrix}
2 & 3 & 3 & 2 & 4 & 4\\
1 & 5 & 4 & 5 & 1 & 1\\
2 & 6 & 2 & 2 & 6 & 2\\
4 & 5 & 1 & 1 & 1 & 5\\
3 & 3 & 2 & 4 & 4 & 2\\
4 & 4 & 4 & 1 & 2 & 1\\
\end{pmatrix}
.$$ 
The rows of $M_2$ are equal to the vectors 
$(\overleftarrow{\cR}_{2_k})_{1\leq k\leq n}$, the component-wise 
product between $\overrightarrow{\cR}_2$ and its $k$-cyclic shift; 
the matrix $M_3$ is constructed similarly. These products play a 
central role in the proof of Theorem~\ref{The:Complexity:EllPeriod} 
and the construction of the normal bases $\cN_2$ and $\cN_3$ in 
Section \ref{sub:SE}. We can apply the same construction on the 
normal basis $\cN_1$, of Section \ref{sub:FE}, but the cases $\cN_2$ 
and $\cN_3$ are sufficient for the current analysis. All the given 
data are obtained from the Magma package and the practical examples
are available in \cite{MagmaPack}.

The matrices $M_2$ and $M_3$ give rise, respectively, to the bounds of the 
elliptic normal bases $\cN_2$ and $\cN_3$. Even though, $M_3$ is full of 
nonzero elements and $M_2$ has a better complexity, we end up with a good 
bound for $\cN_3$, which is not the case for $\cN_2$. This example shows 
clearly that it is hard to predict the weight of the special vectors 
$(\overrightarrow{\cR}^{-1}*\overleftarrow{\cR}_{k})_{2\leq k\leq n-2}$
from its entries. Since the complexity of elliptic normal basis depends 
on these products, it is a nontrivial and interesting research 
problem to obtain some points on elliptic curve that lead to vectors 
with low weight. We leave this for further research.

\vspace{0.5cm}

\noindent {\bf Acknowledgements.} The authors were funded by the Natural
Sciences and Engineering Research Council of Canada.


\begin{thebibliography}{99}

\bibitem{Ash-Blake-Vanstone}
D. W. Ash, I. F. Blake and S. A. Vanstone,
\newblock  Low complexity normal bases,
\newblock {\em  Discrete Appl. Math.},  25 (1989) 191-210.

\bibitem{HMF}
J. Canon, W. Bosma, C. Fieker and A. Steel,
\newblock Handbook of Magma Functions,
\newblock \url{http://magma.maths.usyd.edu.au/magma/}, Sydney.

\bibitem{MagmaPack}
Magma Package, EllNormalBasisComplexity, 
\newblock \url{https://github.com/sallme/EllNormalBasisComplexity}.

\bibitem{CGPT08}
M. Christopoulou, T. Garefalakis, D. Panario and D. Thomson,
\newblock The trace of an optimal normal element and low complexity normal bases,
\newblock {\em Des. Codes Crypt.}, 49 (2008) 199-215.

\bibitem{CL09}
J.-M Couveignes and R. Lercier,
\newblock Elliptic periods for finite fields.
\newblock {\em	Finite Fields Appl.}, 15 (2009) 1-22.

\bibitem{ES19}
T. Ezome and M. Sall,
\newblock  Normal bases from 1-dimensional algebraic groups.
\newblock {\em J. Symb. Comput.}, 101 (2020) 152-169.

\bibitem{ES20}
T. Ezome and M. Sall,
\newblock On finite field arithmetic in characteristic 2. 
\newblock {\em Finite Fields Appl.}, 68 (2020) 101739.

\bibitem{Gao-Thesis}
S. Gao. 
\newblock  Normal bases over finite fields,
\newblock {\em ProQuest LLC, Ann Arbor, MI,} 1993.
\newblock Thesis (Ph.D), University of Waterloo (Canada).

\bibitem{Gao-Gathen-Panario-Shoup}
S. Gao, J. von zur Gathen, D. Panario and V. Shoup,
\newblock Algorithms for exponentiation in finite fields,
\newblock {\em J. Symb. Comput.}, (2000) 879-889

\bibitem{GK08}
S. Gueron and M. E. Kounavis.
\newblock Carry-less multiplication and its usage for computing the GCM mode.
\newblock {\em  White Paper, Intel Corporation.} 2008.

\bibitem{LS87}
H. W. Lenstra and R. Schoof
\newblock Primitive normal bases for finite fields.
\newblock {\em Math. Comp.}, 48 (1987) 217-231.


\bibitem{LN86}
R. Lidl and H. Niederreiter,
\newblock Introduction to Finite Fields and Their Applications,
\newblock {\em Cambridge University Press, Cambridge.} (1986).

\bibitem{MMPT08}
A. M. Masuda, L. Moura, D. Panario and D. Thomson,
\newblock Low complexity normal elements over finite fields of characteristic two,
\newblock {\em IEEE Trans. Comput.}. 57 (2008) 990-1001.

\bibitem{MPT18}
L. Moura, D. Panario, D. Thomson,
\newblock Normal Basis Exhaustive Search: 10 Years Later,
\newblock In: Budaghyan, L., Rodriguez-Henriquez, F. (eds) Arithmetic of Finite Fields. WAIFI 2018.
\newblock {\em Lecture Notes in Computer Science.,} vol 11321, Springer, Cham (2018) 188-206.

\bibitem{MP13}
G. L. Mullen and D. Panario,
\newblock  Handbook of Finite Fields,
\newblock {\em CRC Press, Boca Raton, FL,} (2013).

\bibitem{MOVW89}
R. C. Mullin, I. M. Onyszchuk, S. A. Vanstone and R. C. Wilson,
\newblock Optimal normal bases $GF(p^n)$,
\newblock {\em Discrete Appl. Math.,} 22 (1988/1989) 149-161.


\bibitem{Silv09}
J. H. Silverman,
\newblock The arithmetic of elliptic curves,
\newblock {\em volume 106 of Graduate Texts in Mathematics. Springer, Dordrecht, second edition}, 2009.

\bibitem{TW18}
D. Thomson and C. Weir,
\newblock Artin-Schreier extensions of normal bases.
\newblock {\em Finite Fields Appl.} 53 (2018) 267-286.

\bibitem{W90}
A. Wassermann,
\newblock Konstruktion von Normalbasen,
\newblock {\em Bayreuther Mathematische Schriften.,} 31 (1990) 155-164.



\end{thebibliography}
\end{document}